\def\clap#1{\hbox to 0pt{\hss#1\hss}}
\newtheorem{theorem}{Theorem}
\newtheorem{corollary}[theorem]{Corollary}
\newtheorem{definition}[theorem]{Definition}
\newtheorem{remark}[theorem]{Remark}
\theoremstyle{break}
\theoremstyle{break}
\def\squareforqed{\hbox{\rlap{$\sqcap$}$\sqcup$}}
\def\qed{\ifmmode\squareforqed\else{\unskip\nobreak\hfil
\penalty50\hskip1em\null\nobreak\hfil\squareforqed
\parfillskip=0pt\finalhyphendemerits=0\endgraf}\fi}
\def\endenv{\ifmmode\;\else{\unskip\nobreak\hfil
\penalty50\hskip1em\null\nobreak\hfil\;
\parfillskip=0pt\finalhyphendemerits=0\endgraf}\fi}
\newenvironment{proof}{\noindent \textbf{{Proof~} }}{\qed}
\mathchardef\ordinarycolon\mathcode`\:
\def\vcentcolon{\mathrel{\mathop\ordinarycolon}}
\newcommand{\nc}{\newcommand}
\nc{\rnc}{\renewcommand}
\nc{\beq}{\begin{equation}}
\nc{\eeq}{{\end{equation}}}
\nc{\beqa}{\begin{eqnarray}}
\nc{\eeqa}{\end{eqnarray}}
\nc{\lbar}[1]{\overline{#1}}
\nc{\bra}[1]{\langle#1|}
\nc{\ket}[1]{|#1\rangle}
\nc{\ketbra}[2]{|#1\rangle\!\langle#2|}
\nc{\braket}[2]{\langle#1|#2\rangle}
\nc{\proj}[1]{| #1\rangle\!\langle #1 |}
\nc{\avg}[1]{\langle#1\rangle}
\nc{\Rank}{\operatorname{Rank}}
\nc{\smfrac}[2]{\mbox{$\frac{#1}{#2}$}}
\nc{\tr}{\operatorname{Tr}}
\nc{\ox}{\otimes}
\nc{\dg}{\dagger}
\nc{\dn}{\downarrow}
\nc{\cA}{{\cal A}}
\nc{\cB}{{\cal B}}
\nc{\cC}{{\cal C}}
\nc{\cD}{{\cal D}}
\nc{\cE}{{\cal E}}
\nc{\cF}{{\cal F}}
\nc{\cG}{{\cal G}}
\nc{\cH}{{\cal H}}
\nc{\cI}{{\cal I}}
\nc{\cJ}{{\cal J}}
\nc{\cK}{{\cal K}}
\nc{\cL}{{\cal L}}
\nc{\cM}{{\cal M}}
\nc{\cN}{{\cal N}}
\nc{\cO}{{\cal O}}
\nc{\cP}{{\cal P}}
\nc{\cR}{{\cal R}}
\nc{\cS}{{\cal S}}
\nc{\cT}{{\cal T}}
\nc{\cX}{{\cal X}}
\nc{\cZ}{{\cal Z}}
\nc{\csupp}{{\operatorname{csupp}}}
\nc{\qsupp}{{\operatorname{qsupp}}}
\nc{\var}{{\operatorname{var}}}
\nc{\rar}{\rightarrow}
\nc{\lrar}{\longrightarrow}
\nc{\polylog}{{\operatorname{polylog}}}
\nc{\wt}{{\operatorname{wt}}}
\nc{\av}[1]{{\left\langle {#1} \right\rangle}}
\def\e{\epsilon}
\nc{\RR}{{{\mathbb R}}}
\nc{\CC}{{{\mathbb C}}}
\nc{\FF}{{{\mathbb F}}}
\nc{\NN}{{{\mathbb N}}}
\nc{\ZZ}{{{\mathbb Z}}}
\nc{\PP}{{{\mathbb P}}}
\nc{\QQ}{{{\mathbb Q}}}
\nc{\UU}{{{\mathbb U}}}
\nc{\EE}{{{\mathbb E}}}
\nc{\id}{{\operatorname{id}}}
\nc{\CHSH}{{\operatorname{CHSH}}}
\nc{\be}{\begin{equation}}
\nc{\ee}{{\end{equation}}}
\nc{\bea}{\begin{eqnarray}}
\nc{\eea}{\end{eqnarray}}
\nc{\Hom}[2]{\mbox{Hom}(\CC^{#1},\CC^{#2})}
\nc{\rU}{\mbox{U}}
\nc{\ob}[1]{#1}
\nc{\SEP}{{\text{SEP}}}
\nc{\sep}{{\text{sep}}}
\nc{\LOCC}{{\text{LOCC}}}
\nc{\PPT}{{\text{PPT}}}
\nc{\EXT}{{\text{EXT}}}
\nc{\ALL}{{\text{ALL}}}
\nc{\Sym}{{\operatorname{Sym}}}
\nc{\vertleq}{{\rotatebox[origin=c]{90}{$\leq$}}}
\crefname{equation}{Eq.}{Eqs.}
\crefname{section}{Section}{Sections}
\crefname{subsection}{Section}{Sections}
\crefname{subsubsection}{Section}{Sections}
\crefname{subsubsubsection}{Section}{Sections}
\crefname{appendix}{Appendix}{Appendices}
\crefname{figure}{Fig.}{Figs.}
\crefname{table}{Table}{Tables}
\crefname{enumi}{Item}{Items}
\crefname{enumii}{Item}{Items}
\crefname{enumiii}{Item}{Items}
\crefname{enumiv}{Item}{Items}
\crefname{enumv}{Item}{Items}
\crefname{theorem}{Theorem}{Theorems}
\crefname{lemma}{Lemma}{Lemmas}
\crefname{corollary}{Corollary}{Corollaries}
\crefname{proposition}{Proposition}{Propositions}
\crefname{definition}{Definition}{Definitions}
\crefname{result}{Result}{Results}
\crefname{equations}{Eqs.}{Eqs.}
\crefname{equations}{Equations}{Equations}
\DeclareSymbolFont{txfontsC}{U}{txsyc}{m}{n}
\DeclareMathAlphabet{\mathitbf}{OML}{cmm}{b}{it}
\DeclareMathSymbol{\coloneqq}{\mathrel}{txfontsC}{66}
\DeclareMathSymbol{\eqqcolon}{\mathrel}{txfontsC}{67}
\begin{document}

\title{Non-Additivity of the Entanglement of Purification \protect\\ (Beyond Reasonable Doubt)}

\author{Jianxin Chen}
\email{chenkenshin@gmail.com}
\affiliation{Department of Mathematics \& Statistics, University of Guelph, Guelph, Ontario, Canada}
\affiliation{Institute for Quantum Computing, University of Waterloo, Waterloo, Ontario, Canada}

\author{Andreas Winter}
\email{a.j.winter@bris.ac.uk}
\affiliation{Department of Mathematics, University of Bristol, Bristol BS8 1TW, U.K.}
\affiliation{Centre for Quantum Technologies, National University of Singapore, 
2 Science Drive 3, Singapore 117542, Singapore}

\date{5 June 2012}

\begin{abstract}
We demonstrate the convexity of the difference between the
regularized entanglement of purification and the entropy, as a 
function of the state. This is proved by means of a new 
asymptotic protocol to prepare a state from pre-shared entanglement
and by local operations only.

We go on to employ this convexity property in an investigation of the
additivity of the (single-copy) entanglement of purification: using 
numerical results for two-qubit Werner states we find strong
evidence that the entanglement of purification is different from
its regularization, hence that entanglement of purification is not
additive.
\end{abstract}

\maketitle

\section{Introduction}
\label{sec:intro}
It is well understood that entanglement plays a key role in quantum information
science. The best known applications of quantum entanglement, like superdense 
coding~\cite{dense} and quantum teleportation~\cite{tele}, demonstrate this amply.
The theory of quantum entanglement, which aims at quantifying entanglement, has 
been developed greatly during the past several decades. For a bipartite pure 
state $\psi^{AB} = \proj{\psi}^{AB}$, the von Neumann entropy of the reduced state, 
$S(A) = -\tr \psi^A\log\psi^A$ provides the unique measure of entanglement
where $\psi^A=\tr_B \proj{\psi}^{AB}$.
It is denoted $E(\psi)$, and this number quantifies the asymptotically
faithful conversion rate of many copies of $\psi$
into maximally entangled qubit pairs, and vice versa~\cite{pure-e}.
For mixed state, this asymptotic reversibility is lost, in general the
so-called distillable entanglement is strictly smaller than the 
entanglement cost; see the recent survey~\cite{HHHH-ent} for these
facts and pointers to the vast literature on entanglement quantification.

Motivated by entanglement theory, Terhal \emph{et al.}~\cite{Terhal02}
proposed a measure of total (i.e.~encompassing both quantum and classical) correlations
in a quantum state, called entanglement of purification.

\begin{definition}
  \label{def:EoP}
  Given a bipartite density matrix $\rho^{AB}$ on $A\ox B$, 
  the \emph{entanglement of purification (EoP)} is
  \begin{equation*}\begin{split}
    E_P(\rho) &:= \min E\left(\proj{\psi}^{AA':BB'}\right) \\
              &\phantom{====;} \text{s.t. } \psi^{AA'BB'} \text{ purification of } \rho^{AB},
  \end{split}\end{equation*}
  where $E\left(\proj{\psi}^{AA':BB'}\right) = S(AA')$ is the entanglement
  of the pure state $\psi$ across the bipartite cut $AA':BB'$.
\end{definition}
That the above is really a minimum and not just an infimum follows from 
the fact that w.l.o.g.~the dimensions of $A'$ and $B'$ are bounded in
terms of $|A|$ and $|B|$~\cite{Terhal02}. Indeed, in~\cite{Ibinson08}
it was shown that one may assume
\[
  |A'|, |B'| \leq \operatorname{rank} \rho^{AB} \leq |A||B|.
\]

Entanglement of purification is a genuine measure of total correlation 
in a bipartite state: it is 
non-negative, vanishes precisely on the product states $\rho^{AB}=\rho^A\ox\rho^B$
(which are the only states without any correlations), and is non-increasing
under local operations. Also, it is known to be asymptotically
continuous~\cite{Terhal02}.
Furthermore, it has an operational interpretation as a cost measure.
Namely, it was shown in~\cite{Terhal02} that the entanglement cost of preparing
many copies of a bipartite state $\rho^{AB}$, with the restriction that only a 
vanishing rate of communication is allowed, denoted $E_{\text{LO}_q}(\rho)$, 
equals the regularized entanglement of purification:
\[
  E_{\text{LO}_q}(\rho) = \lim_{n\rightarrow\infty}\frac1n E_P(\rho^{\otimes n})
                        =: E_P^\infty(\rho).
\]
(That a communication $\Theta(\sqrt{n})$ is sufficient and
necessary, even for pure states, was shown by Lo and Popescu~\cite{LoPopescu}
and in~\cite{HarrowLo,HaydenWinter}.)

Hayashi proved that the optimal visible compression rate for mixed states is equal 
to $E_P^\infty$ of a state associated to the ensemble~\cite{Hayashi06}.
More generally, the regularized entanglement of purification characterizes the
communication cost of simulating a channel without prior entanglement~\cite{QRST}
(in contrast to the Quantum Reverse Shannon Theorem).
Furthermore, in~\cite[Theorem 2]{HorodeckiPiani12} entanglement of purification,
or rather its regularization, 
was linked to the maximum advantage a given mixed state yields in dense coding.

However, it is not known how to evaluate the regularized entanglement of purification.
As a matter of fact, it is still an open question whether entanglement of purification is additive, i.e.
\[
  E_P(\rho^{A_1B_1}\otimes \sigma^{A_2B_2}) \stackrel{?}{=} E_P(\rho^{A_1B_1})+E_P(\sigma^{A_2B_2}).
\]
Clearly, a positive answer to this question would imply $E_P^\infty = E_P$,
and thus a single-letter formula for $E_{\text{LO}_q}(\rho)$.
Recently, several similar-looking entanglement quantities and
capacity-like measures were shown to be 
non-additive~\cite{VW01,WernerHolevo02,SmithYard08,HaydenWinter08,Hastings09},
and so one might speculate that the answer to the above question is negative, too.
However, these constructions do not seem to imply anything directly for entanglement of purification.

\begin{remark}
  \label{rem:add}
  $E_P(\psi^{AB})=S(\psi)$ for pure states $\psi=\proj{\psi}$,
  and on product states, $E_P(\rho^A\otimes \rho^B)=0$, so additivity holds 
  for these two classes~\cite{Terhal02}. 
  
  In~\cite{Christandl05} it was shown more generally that $E_P(\rho^{AB})=S(\rho^A)$
  whenever the (pure or mixed) state $\rho$ is supported either on the antisymmetric or the 
  symmetric subspace of $A\ox B$, with $|A|=|B|$.
  So additivity holds for all such states, too.
\end{remark}

In the present paper, we prove results which strongly suggest that entanglement 
of purification may not be additive. 
In Section~\ref{sec:convex-property}, we will introduce a new 
property of the regularized entanglement of purification, which can
be expressed as the convexity of the difference between regularized entanglement of purification
and the entropy of the state, $E_P^\infty(\rho)-S(\rho)$.
Then, in section~\ref{sec:Werner} 
we investigate numerically the functional $E_P(\rho)-S(\rho)$
for the one-parameter family of Werner states on two qubits: since
we find that the latter is not convex, we conclude (except for
gross numerical error) that entanglement of purification is different from its regularization. 
Indeed, our convexity result implies an upper bound on $E_P^\infty$ which
is much smaller than our best estimate for $E_P$ on certain Werner
states.
Finally, in section~\ref{sec:conclusion} we conclude,
highlighting some open questions.

\section{A convexity property of regularized entanglement of purification}
\label{sec:convex-property}
Here we state our main result, a new property of the regularized entanglement
of purification:
\begin{theorem}
  \label{thm:property}
  For a decomposition $\rho^{AB}=\sum_i p_i\rho_i^{AB}$ as an
  ensemble of possibly mixed states $\rho_i$,
  \[
    E_P^{\infty}(\rho^{AB}) \leq \sum_i p_i E_P^{\infty}(\rho_i^{AB}) + \chi(\{p_i;\rho_i\}),
  \]
  where $\chi = \chi(\{p_i;\rho_i\})=S\bigl(\sum_i p_i\rho_i\bigr) - \sum_i p_iS(\rho_i)$ 
  is the Holevo information (cf.~\cite{Wilde:book}).
\end{theorem}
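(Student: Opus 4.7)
The strategy is to bound $E_P^\infty(\rho)$ from above by exhibiting, for each large $n$, a purification of a state $\sigma_n$ that is trace-norm close to $\rho^{\otimes n}$ and whose Schmidt entropy across the $A^n:B^n$ cut (absorbing local ancillae) is at most $n\bigl(\chi+\sum_i p_iE_P^\infty(\rho_i)\bigr)+o(n)$. Combining this with the asymptotic continuity of $E_P$~\cite{Terhal02} and the definition $E_P^\infty(\rho) = \lim_{n} \tfrac{1}{n}E_P(\rho^{\otimes n})$ then yields the theorem.

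The engine of the construction is a classical--quantum resolvability (soft covering) lemma, in the spirit of Holevo--Schumacher--Westmoreland. Fix $\delta>0$ and draw $N=2^{n(\chi+\delta)}$ codewords $\mathbf{l}^{(k)} = (l^{(k)}_1,\ldots,l^{(k)}_n)\in I^n$ i.i.d.\ from $p^{\otimes n}$, conditioned on strong typicality. Resolvability then guarantees that
\[
  \sigma_N \;:=\; \frac{1}{N}\sum_{k=1}^N \bigotimes_{r=1}^n \rho_{l^{(k)}_r}
\]
is $o(1)$-close in trace norm to $\rho^{\otimes n}$ with probability $\to 1$ over the random codebook; I fix such a favourable codebook.

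To purify $\sigma_N$ economically, I exploit that each typical codeword has empirical counts $n_i\approx np_i$. By the very definition of $E_P^\infty(\rho_i)$, for $n_i$ large there is a purification $|\phi_i^{(n_i)}\rangle$ of $\rho_i^{\otimes n_i}$ whose Alice-side entropy is at most $n_iE_P^\infty(\rho_i)+o(n_i)$. Reordering the $n$ sites by symbol and tensoring these pieces gives a purification $|\psi_k\rangle$ of $\bigotimes_r\rho_{l^{(k)}_r}$ with entropy at most $n\sum_i p_iE_P^\infty(\rho_i)+o(n)$. I then form the coherent codebook purification
\[
  |\Sigma_n\rangle \;:=\; \frac{1}{\sqrt N}\sum_{k=1}^N |\psi_k\rangle\,|k\rangle_C,
\]
placing the index register $C$ on Alice's side. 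Orthogonality of the $|k\rangle_C$ makes the $B$-side marginal a classical mixture, so the mixing inequality delivers
\[
  S(\Sigma_n)_{\text{Bob}} \,\leq\, \log N + \frac{1}{N}\sum_k S\bigl((\psi_k)_{BB'}\bigr) \,\leq\, n\bigl(\chi+\delta+\textstyle\sum_i p_iE_P^\infty(\rho_i)\bigr)+o(n),
\]
and by purity this also bounds $S(\Sigma_n)_{\text{Alice}}$, hence $E_P(\sigma_N)$. Asymptotic continuity transfers the bound to $E_P(\rho^{\otimes n})$ at the cost of an $o(n)$ correction; dividing by $n$ and sending $n\to\infty$ then $\delta\to 0$ concludes the argument.

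The main obstacle will be the joint bookkeeping of three distinct error sources---the (exponentially small) resolvability error in trace norm, the regularization errors $o(n_i)$ that decay at different rates for different $i$, and the Fannes-type continuity correction at dimension $(|A||B|)^n$---which must all combine to an honest $o(n)$ after multiplication and summation. A minor technicality is padding the ancillae $A'_i,B'_i$ to a common dimension so that a single $|\Sigma_n\rangle$ lives on a fixed Hilbert space; this costs nothing in entropy.
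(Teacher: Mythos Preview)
Your argument is correct and shares its technical heart with the paper's proof: both rest on the same classical--quantum resolvability (operator covering) lemma to find $N=2^{n(\chi+\delta)}$ typical sequences whose uniform mixture of product states approximates $\rho^{\otimes n}$ in trace norm. The packaging, however, differs. The paper works \emph{operationally}: it invokes the identity $E_P^\infty = E_{\text{LO}_q}$ from~\cite{Terhal02} and exhibits an $\text{LO}_q$ protocol in which $n(\chi+\delta)$ ebits are measured down to shared randomness to sample a codeword, and then the optimal $\text{LO}_q$ protocols for each $\rho_i$ are run as black boxes on the appropriate blocks. You instead stay \emph{entropic}: you build an explicit purification $\ket{\Sigma_n}$ of the covering mixture by coherently superposing near-optimal purifications of the blocks with a label register on Alice's side, bound its Schmidt entropy via the mixing inequality, and then transfer the bound to $\rho^{\otimes n}$ using asymptotic continuity of $E_P$. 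Your route avoids appealing to the operational characterization $E_P^\infty=E_{\text{LO}_q}$ (trading it for the continuity statement, also from~\cite{Terhal02}), and is in that sense a touch more self-contained; the paper's route has the advantage of making the resource interpretation of the two terms ($\chi$ as shared randomness, $\sum_i p_iE_P^\infty(\rho_i)$ as genuine entanglement) transparent. The bookkeeping concerns you flag---uniform control of the $o(n_i)$ regularization errors across finitely many $i$, padding ancillae to a common dimension, and the $o(n)$ Fannes correction at local dimension $(|A||B|)^n$---are all routine and handled exactly as you indicate.
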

\begin{proof}
We shall describe an asymptotic protocol for creating $\rho^{\otimes n}$, 
using asymptotically optimal ways of generating $\rho_i^{\otimes k_i}$ 
(with $k_i\approx np_i$) as subroutines. In the protocol, the term 
$\sum_i p_i E_P^{\infty}(\rho_i^{AB})$ will be naturally visible as 
the rate of entanglement used, while $\chi(\{p_i;\rho_i\})$ will emerge as 
the rate of classical shared randomness (which of course can be obtained
from entanglement at rate $1$ by measuring).

To be specific, we have
\[
  \rho^{\otimes n} = \sum_{i^n=i_1 i_2\ldots i_n} p_{i^n} \rho_{i^n},
\]
with $p_{i^n} = p_{i_1}p_{i_2}\cdots p_{i_n}$ and 
$\rho_{i^n}=\rho_{i_1}\otimes \rho_{i_2}\otimes \cdots\otimes \rho_{i_n}$.
For a string $i^n = i_1 i_2\ldots i_n$ let $k(i|i^n)$ count the number 
of occurrences of $i$. Then, define the set of typical indices,
\[
  \mathcal{T} := \bigl\{ i^n : \forall i\quad | k(i|i^n) - p_i n | \leq \delta n \bigr\}.
\]
Below we outline the argument to show that there exists a family of
indices, $i^n(1),\ldots,i^n(K) \in \cT$, $K=2^{n(\chi+\delta)}$, such that
\[
  \rho^{\otimes n} \approx \sum_{j=1}^K \frac{1}{K} \rho_{i^n(j)},
\]
the approximation being asymptotically perfect in trace norm.
Then the protocol to create $\rho^{\otimes n}$ goes as follows: The
two parties use $n(\chi+\delta)$ ebits to create the same number of shared
random bits; these are used to sample a uniformly random $i^n(j)$,
$j=1,\ldots,K$. Then for each $i$, they invoke the given protocols to generate
$k_i = k_(i|i^n)$ copies of $\rho_i$, using $k_i(E_P^\infty(\rho_i)+\delta)$ 
ebits and $\text{LO}_q$, thus creating an approximation to $\rho_{i^n(j)}$.
The total entanglement consumption of this protocol is 
\[
  \leq n(\chi+\delta) + n\sum_i p_i E_P^\infty(\rho_i) + n\delta + n\delta\log|A|,
\]
which is what we want, since $\delta>0$ can be made arbitrarily small.

The set $\{i^n(1),\ldots,i^n(K)\}$ is shown to exist by the probabilistic
method: Indeed, we draw the $i^n(j)$ i.i.d.~according to the distribution
$q_{i^n} := \frac{1}{Q}p_{i^n}$ on $\cT$,
with $Q=p^n(\cT)$ the probability of finding a random string $i^n$ in the set $\cT$.
The core part of the proof of the main theorem in~\cite{POVM-compr} 
(Theorem 2, specifically p.~163) shows that this works. The same
technique was used again in~\cite[Proposition~2]{GroismanPopescuWinter:total},
incidentally in a different attempt to quantify total correlations in
a quantum state. Here we give only a summary outline.

We need to introduce some more ``typicality'' notation (cf.~\cite{Wilde:book}
for more details and properties of these notions):
The typical projector $\Pi$ of $\rho^{\otimes n}$ is
\[
  \Pi := \left\{ 2^{-nS(\rho)-\delta' n} \leq \rho^{\otimes n} \leq 2^{-nS(\rho) + \delta' n} \right\},
\] 
the spectral projector corresponding to the typical eigenvalues
of $\rho^{\otimes n}$. 
Finally, the conditional typical projectors $\Pi_{i^n}$ of the states
$\rho_{i^n}=\rho_{i_1}\otimes \rho_{i_2}\otimes \cdots\otimes \rho_{i_n}$:
\[
  \Pi_{i^n} := \left\{ 2^{-n\overline{S}-\delta' n} 
                           \leq \rho_{i^n} \leq 2^{-n\overline{S} + \delta' n} \right\},
\]
where $\overline{S}= \sum_i p_i S(\rho_{i})$.
Consider now the operators
\[
  \rho_{i^n}' := \Pi\, \Pi_{i^n}\rho_{i^n}\Pi_{i^n}\,\Pi,
\]
which have the property that for every $\delta'>0$ one can choose $\delta>0$, 
such that for large enough $n$ and all $i^n\in\cT$,
$\|\rho_{i^n}-\rho_{i^n}'\|_1 \leq o(1)$. Thus,
\[
  \rho^{(n)\prime} := \sum_{i^n\in \mathcal{T}} q_{i^n} \rho_{i^n}'
\]
is supported on the typical subspace and defined such that
$\|\rho^{\otimes n} - \rho^{(n)\prime}\|_1 \leq o(1)$. Define
\[
  \Pi^{\prime} := \left\{ \rho^{(n)\prime} \geq \frac{\epsilon}{\tr\Pi} \right\}
\]
as the spectral projector corresponding to the ``large'' eigenvalues of $\rho^{(n)\prime}$.

Finally let
\begin{displaymath}
  \tilde{\rho} := \Pi^{\prime}\rho^{\prime}\Pi^{\prime}
               =\sum_{i^n\in \mathcal{T}} q_{i^n}\sigma_{i^n},
\end{displaymath}
with $\sigma_{i^n}=\Pi^{\prime} \rho_{i^n}' \Pi^{\prime}$.

Now, observe that, restricted to the support of $\Pi^{\prime}$,
and for $i^n\in \mathcal{T}$,
\begin{align*}
  \tilde{\rho} &\geq 2^{-nS(\rho)-\delta' n}\Pi', \\
  \sigma_{i^n} &\leq 2^{-n\bar{S}+\delta' n}.
\end{align*}
In this situation we can apply the operator sampling lemma in \cite{Ahlswede02} 
and conclude that with high probability, $i^n(1),\ldots,i^n(K) \in \mathcal{T}$
are such  that for large enough $n$, $K \leq 2^{n(\chi+3\delta')}$ and 
\begin{displaymath}
  \left\| \frac{1}{K} \sum\limits_{j=1}^K \sigma_{i^n} - \tilde{\rho}^{(n)} \right\|_1 \leq o(1),
\end{displaymath}
hence similarly the same for the distance of the analogous sum over the
$\rho_{i^n(j)}$, from $\rho^{\otimes n}$. Since $\delta'>0$ was arbitrary,
this concludes the proof.
\end{proof}

\medskip
For our present purposes, we rearrange the terms in the above theorem:
\begin{corollary}
  \label{convex}
  For $\rho^{AB}=\sum\limits_ip_i\rho_i^{AB}$,
  \begin{displaymath}
    E_P^{\infty}(\rho^{AB})-S(\rho^{AB}) 
           \leq \sum_i p_i \bigl( E_P^{\infty}(\rho_i^{AB})-S(\rho_i^{AB}) \bigr).
  \end{displaymath}
  In other words, $E_P^{\infty}(\rho)-S(\rho)$ is a convex function of $\rho$.
  \qed
\end{corollary}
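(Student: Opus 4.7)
The plan is to derive the corollary by a direct algebraic rearrangement of the inequality already established in Theorem~\ref{thm:property}. Starting from
\[
  E_P^{\infty}(\rho^{AB}) \leq \sum_i p_i\, E_P^{\infty}(\rho_i^{AB}) + \chi(\{p_i;\rho_i\}),
\]
I would expand the Holevo quantity using its definition together with the hypothesis $\rho^{AB} = \sum_i p_i \rho_i^{AB}$, obtaining $\chi(\{p_i;\rho_i\}) = S(\rho^{AB}) - \sum_i p_i S(\rho_i^{AB})$. Substituting this identity into the inequality and subtracting $S(\rho^{AB})$ from both sides immediately yields
\[
  E_P^{\infty}(\rho^{AB}) - S(\rho^{AB}) \leq \sum_i p_i\bigl( E_P^{\infty}(\rho_i^{AB}) - S(\rho_i^{AB}) \bigr),
\]
which is exactly the claimed inequality.

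Since the bound holds for every ensemble decomposition $\rho^{AB} = \sum_i p_i \rho_i^{AB}$ into (possibly mixed) components, the functional $\rho \mapsto E_P^{\infty}(\rho) - S(\rho)$ satisfies the defining inequality for a convex function on the set of density matrices. There is no substantive obstacle here: the corollary is a one-line rewriting of Theorem~\ref{thm:property}, and the only point worth flagging is that Theorem~\ref{thm:property} was deliberately stated for ensembles of \emph{mixed} states, which is precisely what is required so that the resulting inequality captures convexity (rather than merely the weaker property of concavity-of-the-roof-type) of the difference $E_P^{\infty} - S$.
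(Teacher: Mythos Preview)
Your proposal is correct and matches the paper's approach exactly: the corollary is obtained simply by rearranging the terms in Theorem~\ref{thm:property}, using $\chi(\{p_i;\rho_i\}) = S(\rho^{AB}) - \sum_i p_i S(\rho_i^{AB})$. The paper in fact offers no separate proof beyond the remark that one ``rearrange[s] the terms in the above theorem,'' which is precisely what you have done.
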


Thus if we can find some examples to show $E_P(\rho)-S(\rho)$ is not convex
on quantum states, then $E_P^{\infty}$ can not be equal to $E_P$, which will 
prove that $E_P$ is not additive on some states.
In the following section, we will present the numerical results for two-qubit 
Werner states (replicating essentially the study of~\cite{Terhal02}), which
indicate that entanglement of purification is not additive.

\section{Two-qubit Werner states}
\label{sec:Werner}
Here we are considering the two-qubit Werner states, arguably
the simplest family of states not covered by the additivity results
mentioned in Remark~\ref{rem:add}:
\[
  W(f) := f\Psi_0 + (1-f)\frac13(\1-\Psi_0),
\]
with the maximally entangled singlet state $\Psi_0$ and $0\leq f\leq 1$
(the singlet fraction).

Already in the original EoP paper~\cite{Terhal02}, the authors performed a
numerical minimization with $|A'|, |B'| \leq 4$, which thanks to~\cite{Ibinson08} 
we know to be sufficient to find $E_P(W(f))$ -- see Fig.~\ref{fig:THLD}.
One way of looking at the minimization that one has to perform is as
follows: Diagonalizing the state, $\rho = \sum_{i=0}^3 \lambda_i \proj{\Psi_i}$,
where $\ket{\Psi_i}$ are the four Bell states, starting with the singlet
$\ket{\Psi_0}$, and $\lambda_0 = \e$, $\lambda_1=\lambda_2=\lambda_3=\frac{1-\e}{3}$.
Then we can write a standard purification
\[
  \ket{\varphi}^{ABA'} = \sum_{i=0}^3 \sqrt{\lambda_i} \ket{\Psi_i}^{AB}\ket{i}^{A'},
\]
and any other purification $\ket{\psi} \in ABA'B'$ of $\rho^{AB}$ we can
obtain as
\begin{equation}
  \label{eq:isometry}
  \ket{\psi}^{ABA'B'} = (\1^{AB}\ox V)\ket{\varphi}^{ABA'},
\end{equation}
with an isometry $V:A' \hookrightarrow A'B'$, described by $64$ complex
numbers (subject to normalization and orthogonality constraints, effectively 
leaving $30+29+27+25 = 121$ independent real parameters). Note that one
could extend the isometry to a unitary $U$ on $A'B'$, 
with $U\ket{\phi}^{A'}\ket{0}^{B'} = V\ket{\phi}$ -- however, this introduces
a large number of spurious variables, in fact more than doubling them to $256$,
which have no impact on the objective function.

\begin{figure}[ht]
\includegraphics*[width=0.52\textwidth]{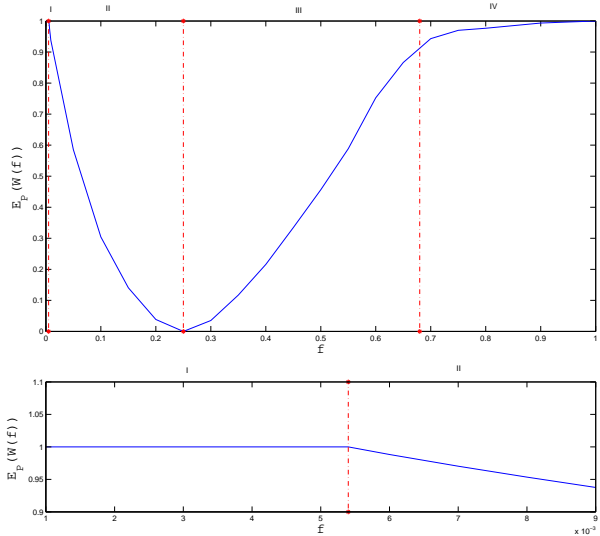}
\caption{The numerical results of~\cite{Terhal02} for $E_P(W(f))$.
    Note that the only
    values known rigorously are at $f=0$ and $f=1$ (both $1$) and
    at $f=\frac14$ ($0$).
    Four different regimes were observed numerically.
    In the first regime, which only extends over a very small range, 
    approximately $0\leq f \leq .005$, the optimal $V$ of eq.~(\ref{eq:isometry}) 
    seems to be the trivial $\ket{\phi}^{A'} \mapsto \ket{\phi}^{A'}\!\ket{0}^{B'}$.
    Thus on this short interval, $E_P(W(f))=1$. 
    In the second regime (roughly $.005 \leq f \leq .25$), entanglement of 
    purification appears convex
    and steeply decreasing with $f$.}
\label{fig:THLD}
\end{figure}

The graph shows an apparent -- concave! -- kink (discontinuity of the first
derivative) at $f \approx .005$. Note that if the kink was real, 
we had achieved our goal, since the entropy $S(W(f))$ is a smooth function on
the open interval $(0,1)$, hence the difference $\Delta(f) = E_P(W(f)) - S(W(f))$
could not possibly be convex as a function of $f$.

Motivated by this observation, we did a re-calculation for $0\leq f \leq .01$.
This revealed that the first regime, where
$E_P(W(f)) \approx 1$, is smaller than it was observed in~\cite{Terhal02};
the range we determined is about $[0, .004]$~\cite{Smolin05}, although the
deviation is tiny. However, we
still see the change from a regime where the function is almost constant $1$
to one where it decreases sharply with $f$. 
In Fig~\ref{fig:concave} we show $\Delta(f)$ and
one can see that indeed it is not convex.

\begin{figure}[ht]
\includegraphics*[width=0.5\textwidth]{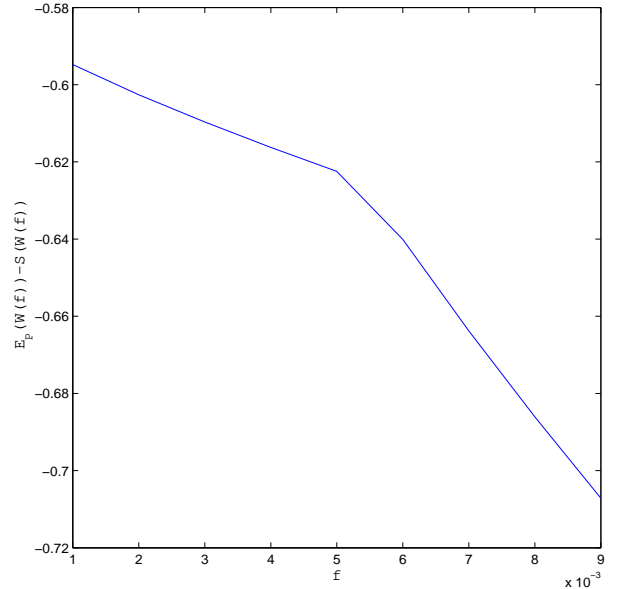}
\caption{$\Delta(f) = E_P(W(f)) - S(W(f))$ for $0 < f < .01$.}
\label{fig:concave}
\end{figure}

We should point out that by using standard minimization algorithms
(local descent with various, usually random, starting points),
we cannot calculate the exact value of entanglement of purification:
What these methods give us are at best \emph{local} minima. 
However, we can treat the local minima from numerics as upper bounds on
the entanglement of purifications, since the algorithm finds 
concrete feasible points with certain values of the objective function
to be minimized:
\[
   E_P(W(0)) = 1 \quad\text{and}\quad E_P(W(.01)) \leq .9226,
\]
showing via theorem~\ref{thm:property} that $E_P^\infty(W(.005)) \leq .9663$.

Put differently, if $E_P(W(.005)) > .9663$, we will have
\[
  \Delta(0.005) > \frac12 \Delta(0) + \frac12 \Delta(.01),
\]
\emph{i.e.}~non-convexity of $E_P(\rho)-S(\rho)$, and thus 
non-additivity of entanglement of purification.

The numerics suggests $E_P(W(.005)) \geq .99$, not even coming close 
to the above value of $.9663$. Hence, unless there is some deep and narrow
``crevasse'' in the landscape of the function $E(\psi^{AA':BB'})$, hiding 
the true minimum value, we are forced to conclude that
$E_P^\infty(W(.005))$ is strictly smaller than $E_P(W(.005))$.

\section{Discussion and conclusions}
\label{sec:conclusion}
Our main new contribution to the study of entanglement of purification,
and its regularization $E_P^\infty = E_{\text{LO}_q}$, is 
theorem~\ref{thm:property}. A special case of its application is when
$\rho^{AB}$ is decomposed into \emph{product states}, meaning that 
$E_P(\rho^{AB}_i) = 0$. Then, the protocol described in the proof 
of the theorem uses only shared randomness, at rate $\chi(\{p_i,\rho_i\})$.
This generalizes a result due to Wyner~\cite{Wyner:CR}
(cf.~\cite{Winter:public-secret} for a more modern account) on the 
creation of a bipartite distribution $P_{XY}$ by local operations
(noisy channels) from limited shared randomness:
\begin{equation*}\begin{split}
  w(P_{XY}) &= \min I(V:XY) \\
            &\phantom{==}
             \text{s.t. } X \text{---} V \text{---} Y \text{ is a Markov chain,}
\end{split}\end{equation*}
where $I(V:XY) = H(XY)-H(XY|V)$ is the Shannon mutual information.

Since the theorem puts a nontrivial bound on the regularized 
entanglement of purification, expressed conveniently as the
convexity of $E_P^\infty-S$, we could use it to probe the additivity
of entanglement of purification. We find that, apart from the possibility of a gross numerical
error, entanglement of purification is non-additive already on certain two-qubit Werner states.
Interestingly, we can only say that for some sufficiently large
$n$, $\frac1n E_P(\rho^{\otimes n}) < E_P(\rho)$, but our proof of
theorem~\ref{thm:property} does not yield directly an estimate for 
this $n$; in any case, we may expect it to be rather large.

The non-additivity of entanglement of purification also answers 
a question from~\cite{HorodeckiPiani12}:
Indeed, our results imply the non-additivity of the ``quantum advantage 
of dense coding'' on some states, via their monogamy
identity~\cite[Theorem 2]{HorodeckiPiani12}.

To come back to our Werner state example: Of course,
it would be most desirable to remove the need for
numerical calculation in the argument. We leave a completely
rigorous proof of the non-additivity of entanglement of purification to 
future work; noting
only that since our example is concrete, and we have a
concrete benchmark,
\[
  E_P(W(.005)) \gtrless .9663,
\]
this could be accomplished in principle by discretization and exhaustive
search over the parameter space. 
The reason we have not done this is that such a brute force approach
is too CPU intensive for practical desktop PC calculations.

In a similar vein, we would like
to find explicit states $\rho$ and $\sigma$ with
\[
  E_P(\rho^{A_1B_1}\otimes \sigma^{A_2B_2}) \neq E_P(\rho^{A_1B_1})+E_P(\sigma^{A_2B_2}).
\]

To end, we remark that our study does not impact on the 
possible non-additivity of $E_P^\infty = E_{\text{LO}_q}$, which we
recommend to the reader as an interesting problem in itself. Even
more interesting however is the problem of finding a tractable
(or even ``single-letter'') expression for $E_P^\infty$, which
in a certain sense would generalized Wyner's beautiful answer for the
classical randomness cost of probability distributions 
$P_{XY}$~\cite{Wyner:CR,Winter:public-secret}.

\acknowledgments
The authors thank Fernando Brand\~{a}o and Jonathan Oppenheim for conversations
on the entanglement of purification.

The work of JC is supported by NSERC and NSF of China (Grant No. 61179030).
AW is supported by the European Commission (STREP ``QCS'' and
Integrated Project ``QESSENCE''), the ERC (Advanced Grant ``IRQUAT''),
a Royal Society Wolfson Merit Award and a Philip Leverhulme Prize.
The Centre for Quantum Technologies is funded by the Singapore
Ministry of Education and the National Research Foundation as part
of the Research Centres of Excellence programme.

\bibliographystyle{unsrt}

\end{document}